\newtheorem{theorem}{Theorem}
\newtheorem{lemma}{Lemma}
\newtheorem{corollary}{Corollary}
\def\b{\ensuremath\boldsymbol}
\icmltitlerunning{}
\begin{document}

\AddToShipoutPictureBG*{%
  \AtPageUpperLeft{%
    \setlength\unitlength{1in}%
    \hspace*{\dimexpr0.5\paperwidth\relax}
    \makebox(0,-0.75)[c]{\normalsize {\color{black} To appear as a part of an upcoming textbook on dimensionality reduction and manifold learning.}}
    }}

\twocolumn[
\icmltitle{Uniform Manifold Approximation and Projection (UMAP) and its Variants: \\Tutorial and Survey}

\icmlauthor{Benyamin Ghojogh}{bghojogh@uwaterloo.ca}
\icmladdress{Department of Electrical and Computer Engineering, 
\\Machine Learning Laboratory, University of Waterloo, Waterloo, ON, Canada}
\icmlauthor{Ali Ghodsi}{ali.ghodsi@uwaterloo.ca}
\icmladdress{Department of Statistics and Actuarial Science \& David R. Cheriton School of Computer Science, 
\\Data Analytics Laboratory, University of Waterloo, Waterloo, ON, Canada}
\icmlauthor{Fakhri Karray}{karray@uwaterloo.ca}
\icmladdress{Department of Electrical and Computer Engineering, 
\\Centre for Pattern Analysis and Machine Intelligence, University of Waterloo, Waterloo, ON, Canada}
\icmlauthor{Mark Crowley}{mcrowley@uwaterloo.ca}
\icmladdress{Department of Electrical and Computer Engineering, 
\\Machine Learning Laboratory, University of Waterloo, Waterloo, ON, Canada}

\icmlkeywords{Tutorial}

\vskip 0.3in
]

\begin{abstract}
Uniform Manifold Approximation and Projection (UMAP) is one of the state-of-the-art methods for dimensionality reduction and data visualization. This is a tutorial and survey paper on UMAP and its variants. We start with UMAP algorithm where we explain probabilities of neighborhood in the input and embedding spaces, optimization of cost function, training algorithm, derivation of gradients, and supervised and semi-supervised embedding by UMAP. Then, we introduce the theory behind UMAP by algebraic topology and category theory. Then, we introduce UMAP as a neighbor embedding method and compare it with t-SNE and LargeVis algorithms. We discuss negative sampling and repulsive forces in UMAP's cost function. DensMAP is then explained for density-preserving embedding. We then introduce parametric UMAP for embedding by deep learning and progressive UMAP for streaming and out-of-sample data embedding. 
\end{abstract}

\section{Introduction}

Dimensionality reduction and manifold learning can be used for feature extraction and data visualization. Dimensionality reduction methods can be divided into three categories which are spectral methods, probabilistic methods, and neural network-based methods \cite{ghojogh2021data}. Some of the probabilistic methods are neighbor embedding methods where the probabilities of neighborhoods are used in which attractive and repulsive forces are utilized for neighbor and non-neighbor points, respectively. Some of the well-known neighbor embedding methods are Student's t-distributed Stochastic Neighbor Embedding (t-SNE) \cite{maaten2008visualizing,ghojogh2020stochastic}, LargeVis \cite{tang2016visualizing}, and Uniform Manifold Approximation and Projection (UMAP) \cite{mcinnes2018umap}. Interestingly, both t-SNE and UMAP are state-of-the-art methods for data visualization. 
The reason behind the name of UMAP is that it assumes and approximates that the data points are uniformly distributed on an underlying manifold. The term ``projection" in the name of algorithm is because it sort of projects, or embeds, data onto a subspace for dimensionality reduction. 

The theory behind UMAP is based on algebraic topology and category theory. 
The main idea of UMAP is constructing fuzzy topological representations for both high-dimensional data and low-dimensional embedding of data and changing the embedding so that its fuzzy topological representation becomes similar to that of the high-dimensional data. UMAP has been widely used for DNA and single-cell data visulization and feature extraction \cite{becht2019dimensionality,dorrity2020dimensionality}. It is noteworthy that t-SNE has also been used for single-cell applications \cite{kobak2019art}.  
Some other applications of UMAP are visualizing deep features \cite{carter2019activation}, art \cite{vermeulen2021application}, and visualizing BERT features in natural language processing \cite{coenen2019visualizing,levine2019sensebert}. 
This paper is a tutorial and survey paper on UMAP and its variants. 

The remainder of this paper is organized as follows. We explain the details of UMAP algorithm in Section \ref{section_UMAP} and explain the category theory and algebraic topology behind it in Section \ref{section_categoryTheory_algebraicTopology}. Explaining UMAP as neighbor embedding and comparison with t-SNE and LargeVis are explained in Section \ref{section_compare_tSNE_LargeVis}. Then, we discuss the repulsive forces, negative sampling, and effective cost function of UMAP in Section \ref{section_discussion_repulsive_forces}. DensMAP, parametric UMAP, and progressive UMAP are introduced in Sections \ref{section_DensMAP}, \ref{section_parametric_UMAP}, and \ref{section_progressive_UMAP}, respectively. Finally, Section \ref{section_conclusion} concludes the paper. 

\section*{Required Background for the Reader}

This paper assumes that the reader has general knowledge of calculus, probability, linear algebra, and basics of optimization. 
The required background on algebraic topology and category theory are explained in the paper. 

\section{UMAP}\label{section_UMAP}

\subsection{Data Graph in the Input Space}

Consider a training dataset $\b{X} = [\b{x}_1, \dots, \b{x}_n] \in \mathbb{R}^{d \times n}$ where $n$ is the sample size and $d$ is the dimensionality. We construct a $k$-Nearest Neighbors ($k$NN) graph for this dataset. 
It has been empirically observed that UMAP requires fewer number of neighbors than t-SNE \cite{sainburg2020parametric}. Its default value is $k=15$.
We denote the $j$-th neighbor of $\b{x}_i$ by $\b{x}_{i,j}$.
Let $\mathcal{N}_i$ denote the set of neighbor points for the point $\b{x}_i$, i.e., $\mathcal{N}_i := \{\b{x}_{i,1}, \dots, \b{x}_{i,k}\}$.
We treat neighbor relationship between points stochastically. 
Inspired by SNE \cite{hinton2003stochastic} and t-SNE \cite{maaten2008visualizing,ghojogh2020stochastic}, we use the Gaussian or Radial Basis Function (RBF) kernel for the measure of similarity between points in the input space.
The probability that a point $\b{x}_i$ has the point $\b{x}_j$ as its neighbor can be computed by the similarity of these points:
\begin{align}\label{equation_UMAP_p_Directional}
p_{j|i} := 
\left\{
    \begin{array}{ll}
        \exp\big(\!-\frac{\|\b{x}_i - \b{x}_j\|_2 - \rho_i}{\sigma_i}\big) & \mbox{if } \b{x}_j \in \mathcal{N}_i \\
        0 & \mbox{Otherwise},
    \end{array}
\right.
\end{align}
where $\|.\|_2$ denotes the $\ell_2$ norm.
The $\rho_i$ is the distance from $\b{x}_i$ to its nearest neighbor:
\begin{align}\label{equation_umap_rho}
\rho_i := \min\{\|\b{x}_i - \b{x}_{i,j}\|_2\, |\, 1\leq j\leq k\}.
\end{align}
The $\sigma_i$ is the scale parameter which is calculated such that the total similarity of point $\b{x}_i$ to its $k$ nearest neighbors is normalized. By binary search, we find $\sigma_i$ to satisfy:
\begin{align}\label{equation_umap_sigma}
\sum_{j=1}^k \exp\!\Big(\!\!-\!\frac{\|\b{x}_i - \b{x}_{i,j}\|_2 - \rho_i}{\sigma_i}\Big) = \log_2(k).
\end{align}
Note that t-SNE \cite{maaten2008visualizing} has a similar search for its scale using entropy as perplexity. These searches make the neighborhoods of various points behave similarly because the scale for a point in a dense region of dataset becomes small while the scale of a point in a sparse region of data becomes large. In other words, UMAP and t-SNE both assume (or approximate) that points are uniformly distributed on an underlying low-dimensional manifold. This approximation is also included in the name of UMAP. 

Eq. (\ref{equation_UMAP_p_Directional}) is a directional similarity measure. To have a symmetric measure with respect to $i$ and $j$, we symmetrize it as:
\begin{align}\label{equation_UMAP_p}
\mathbb{R} \ni p_{ij} := p_{j|i} + p_{i|j} - p_{j|i}\, p_{i|j}. 
\end{align}
This is a symmetric measure of similarity between points $\b{x}_i$ and $\b{x}_j$ in the input space. 

\subsection{Data Graph in the Embedding Space}

Let the embeddings of points be $\b{Y} = [\b{y}_1, \dots, \b{y}_n] \in \mathbb{R}^{p \times n}$ where $p$ is the dimensionality of embedding space and is smaller than input dimensionality, i.e., $p \ll d$. 
Note that $\b{y}_i$ is the embedding corresponding to $\b{x}_i$.
In the embedding space, the probability that a point $\b{y}_i$ has the point $\b{y}_j$ as its neighbor can be computed by the similarity of these points:
\begin{align}\label{equation_UMAP_q}
\mathbb{R} \ni q_{ij} := (1 + a\, \|\b{y}_i - \b{y}_j\|_2^{2b})^{-1},
\end{align}
which is symmetric with respect to $i$ and $j$. The variables $a>0$ and $b>0$ are hyperparameters determined by the user. By default, we have $a \approx 1.929$ and $b \approx 0.7915$ \cite{mcinnes2018umap}, although it has been empirically seen that setting $a=b=1$ does not qualitatively impact the results \cite{bohm2020unifying}. 

\subsection{Optimization Cost Function}

UMAP aims to make the data graph in the low-dimensional embedding space similar to the data graph in the high-dimensional embedding space. 
In other words, we treat Eqs. (\ref{equation_UMAP_p}) and (\ref{equation_UMAP_q}) as probability distributions and minimize the difference of these distributions to make similarities of points in the embedding space similar to similarities of points in the input space. 
A measure for the difference of these similarities of graphs is the fuzzy cross-entropy defined as:
\begin{align}\label{equation_umap_cost}
c_1 := \sum_{i=1}^n \sum_{j=1, j \neq i}^n \Big(p_{ij} \ln(\frac{p_{ij}}{q_{ij}}) + (1 - p_{ij}) \ln(\frac{1 - p_{ij}}{1 - q_{ij}})\Big),
\end{align}
where $\ln(.)$ is the natural logarithm. 
The definition of this cross-entropy is in the field of fuzzy category theory which we will explain in Section \ref{section_categoryTheory_algebraicTopology} (see Eq. (\ref{equation_fuzzy_cross_entropy})).

The first term in Eq. (\ref{equation_umap_cost}) is the \textit{attractive force} which attracts the embeddings of neighbor points toward each other. 
This term should only appear when $p_{ij} \neq 0$ which means either $\b{x}_j$ is a neighbor of $\b{x}_i$, or $\b{x}_i$ is a neighbor of $\b{x}_j$, or both (see Eq. (\ref{equation_UMAP_p})). 
The second term in Eq. (\ref{equation_umap_cost}) is the \textit{repulsive force} which repulses the embeddings of non-neighbor points away from each other. As the number of all permutations of non-neighbor points is very large, computation of the second term is non-tractable in big data. Inspired by Word2Vec \cite{mikolov2013distributed} and LargeVis \cite{tang2016visualizing}, UMAP uses \textit{negative sampling} where, for every point $\b{x}_i$, $m$ points are sampled randomly from the training dataset and treat them as non-negative (negative) points for $\b{x}_i$. As the dataset is usually large, i.e. $m \ll n$, the sampled points will be actual negative points with high probability.
The summation over the second term in Eq. (\ref{equation_umap_cost}) is computed only over these negative samples rather than \textit{all} negative points. 


UMAP changes the data graph in the embedding space to make it similar to the data graph in the input space. 
Eq. (\ref{equation_umap_cost}) is the cost function which is minimized in UMAP where the optimization variables are $\{y_i\}_{i=1}^n$:
\begin{align*}
&\min_{\{\b{y}_i\}_{i=1}^n} c_1 := \min_{\{\b{y}_i\}_{i=1}^n} \sum_{i=1}^n \sum_{j=1, j \neq i}^n \Big(p_{ij} \ln(p_{ij}) - p_{ij} \ln(q_{ij}) \\
&~~~~~~~ + (1 - p_{ij}) \ln(1 - p_{ij}) - (1 - p_{ij}) \ln(1 - q_{ij})\Big) \\
&= \!\!\min_{\{\b{y}_i\}_{i=1}^n} -\sum_{i=1}^n \sum_{j=1, j \neq i}^n\!\!\! \Big( p_{ij} \ln(q_{ij}) + (1 - p_{ij}) \ln(1 - q_{ij})\Big) 
\end{align*}
According to Eqs. (\ref{equation_UMAP_p_Directional}), (\ref{equation_UMAP_p}), and (\ref{equation_UMAP_q}), in contrast to $q_{ij}$, the $p_{ij}$ is independent of the optimization variables $\{y_i\}_{i=1}^n$. Hence, we can drop the constant terms to revise the cost function:
\begin{align}\label{equation_umap_cost2}
&c_2 := -\sum_{i=1}^n \sum_{j=1, j \neq i}^n \Big(p_{ij} \ln(q_{ij}) + (1 - p_{ij}) \ln(1 - q_{ij})\Big),
\end{align}
which should be minimized. 
Two important terms in this cost function are:
\begin{align}
&c^a_{i,j} := - \ln(q_{ij}), \label{equation_umap_cost2_attractive} \\
&c^r_{i,j} := - \ln(1-q_{ij}), \label{equation_umap_cost2_repulsive}
\end{align}
and we can write:
\begin{align}
c_2 &:= \sum_{i=1}^n \sum_{j=1, j \neq i}^n \Big(p_{ij}\, c^a_{i,j} + (1 - p_{ij})\, c^r_{i,j}\Big) \label{equation_umap_cost2_withAttractiveAndRepulsiveForces}\\
&\overset{(a)}{=} 2\sum_{i=1}^n \sum_{j=i+1}^n \Big(p_{ij}\, c^a_{i,j} + (1 - p_{ij})\, c^r_{i,j}\Big), \label{equation_umap_cost2_withAttractiveAndRepulsiveForces_2}
\end{align}
where $(a)$ is because $p_{ij}=p_{ji}$, $c_{i,j}^a=c_{j,i}^a$, and $c_{i,j}^r=c_{j,i}^r$ are symmetric. 

The Eqs. (\ref{equation_umap_cost2_attractive}) and (\ref{equation_umap_cost2_repulsive}) are the attractive and repulsive forces in Eq. (\ref{equation_umap_cost2}), respectively. The attractive force attracts the neighbor points toward each other in the embedding space while the repulsive force pushes the non-neighbor points (i.e., points with low probability of being neighbors) away from each other in the embedding space. 
According to Eq. (\ref{equation_umap_cost2_withAttractiveAndRepulsiveForces}), $c^a_{i,j}$ and $c^r_{i,j}$ occur with probability $p_{ij}$ and $(1- p_{ij})$, respectively.
For every point, we call it the \textit{anchor} point and we call its neighbor and non-neighbor points, with large and small $p_{ij}$, as the \textit{positive} and \textit{negative} points, respectively. 

\SetAlCapSkip{0.5em}
\IncMargin{0.8em}
\begin{algorithm2e}[!t]
\DontPrintSemicolon
    \textbf{Input}: Training data $\{\b{x}_i\}_{i=1}^n$\;
    Construct $k$NN graph\;
    Initialize $\{\b{y}_i\}_{i=1}^n$ by Laplacian eigenmap\;
    Calculate $p_{ij}$ and $q_{ij}$ for $\forall i,j \in \{1, \dots, n\}$ by Eqs. (\ref{equation_UMAP_p}) and (\ref{equation_UMAP_q})\;
    $\eta \gets 1$, $\nu \gets 0$\;
    \While{not converged}{
        $\nu \gets \nu + 1$ \quad // epoch index\; 
        \For{$i$ from $1$ to $n$}{
            \For{$j$ from $1$ to $n$}{
                $u \sim U(0,1)$\;
                \If{$u \leq p_{ij}$}{
                    $\b{y}_i \gets \b{y}_i - \eta \frac{\partial c^a_{i,j}}{\partial \b{y}_i}$\;
                    $\b{y}_j \gets \b{y}_j - \eta \frac{\partial c^a_{i,j}}{\partial \b{y}_j}$\;
                    \For{$m$ iterations}{
                        $l \sim U\{1, \dots, n\}$\;
                        $\b{y}_i \gets \b{y}_i - \eta \frac{\partial c^r_{i,l}}{\partial \b{y}_i}$\;
                        // The next line does not exist in original UMAP:\;
                        $\b{y}_l \gets \b{y}_l - \eta \frac{\partial c^r_{i,l}}{\partial \b{y}_l}$ \label{algorithm_umap_update_negativeSample}\;
                    }
                }
            }
        }
        $\eta \gets 1 - \frac{\nu}{\nu_\text{max}}$\;
    }
    \textbf{Return} $\{\b{y}_i\}_{i=1}^n$\;
\caption{UMAP algorithm}\label{algorithm_umap}
\end{algorithm2e}
\DecMargin{0.8em}

\subsection{The Training Algorithm of UMAP}

The procedure of optimization in UMAP is shown in Algorithm \ref{algorithm_umap}. As this algorithm shows, a $k$NN graph is constructed from the training data $\{\b{x}_i\}_{i=1}^n$. UMAP uses Laplacian eigenmap \cite{belkin2001laplacian,ghojogh2021laplacian}, also called spectral embedding, for initializing the embeddings of points denoted by $\{\b{y}_i\}_{i=1}^n$. 
Using Eqs. (\ref{equation_UMAP_p}) and (\ref{equation_UMAP_q}), $p_{ij}$ and $q_{ij}$ are calculated for all points. Stochastic Gradient Descent (SGD) is used for optimization where optimization is performed iteratively. In every iteration (epoch), we iterate over points twice with indices $i$ and $j$ where the $i$-th point is called the \textit{anchor}. For every pair of points $\b{x}_i$ and $\b{x}_j$, we update their embeddings $\b{x}_i$ and $\b{x}_j$ with probability $p_{ij}$ (recall Eq. (\ref{equation_umap_cost2})). 
If $p_{ij}$ is large, it means that the points $\b{x}_i$ and $\b{x}_j$ are probably neighbors (in this case, the $j$-th point is called the \textit{positive} point) and their embeddings are highly likely to be updated to become close in the embedding space based on the attractive force. 
For implementing it, we can sample a uniform value from the continuous uniform distribution $U(0,1)$ and if that is less than $p_{ij}$, we update the embeddings. 
We update the embeddings $\b{y}_i$ and $\b{y}_j$ by gradients $\partial c^a_{i,j} / \partial \b{y}_i$ and $\partial c^a_{i,j} / \partial \b{y}_j$, respectively, where $\eta$ is the learning rate.

For repulsive forces, we use negative sampling as was explained before. If $m$ denotes the size of negative sample, we sample $m$ indices from the discrete uniform distribution $U\{1, \dots, n\}$. These are the indices of points which are considered as \textit{negative} samples $\{\b{y}_l\}$ where $|\{\b{y}_l\}|=m$. 
As the size of dataset is usually large enough to satisfy $n \gg m$, these negative points are probably valid because many of the points are non-neighbors of the considered anchor. 
In negative sampling, the original UMAP \cite{mcinnes2018umap} updates only the embedding of anchor $\b{y}_i$ by gradient of the repulsive force $\partial c^a_{i,j} / \partial \b{y}_i$. One can additionally update the embedding of negative point $\b{y}_l$ by gradient of the repulsive force $\partial c^a_{i,j} / \partial \b{y}_l$ \cite{damrich2021umap}. 
The mentioned gradients are computed in the following lemmas.

\begin{lemma}[\cite{mcinnes2018umap}]
The gradients of attractive and repulsive cost functions in UMAP are:
\begin{align}
& \frac{\partial c^a_{i,j}}{\partial \b{y}_i} = \frac{2ab \|\b{y}_i - \b{y}_j\|_2^{2(b-1)}}{(1 + a\, \|\b{y}_i - \b{y}_j\|_2^{2b})} (\b{y}_i - \b{y}_j), \\
& \frac{\partial c^r_{i,j}}{\partial \b{y}_i} = \frac{-2b}{(\varepsilon+\|\b{y}_i - \b{y}_j\|_2^2) (1 + a\, \|\b{y}_i - \b{y}_j\|_2^{2b})} (\b{y}_i - \b{y}_j), 
\end{align}
where $\varepsilon$ is a small positive number, e.g. $\varepsilon = 0.001$, for stability to prevent division by zero when $\b{y}_i \approx \b{y}_j$.
Likewise, we have:
\begin{align*}
& \frac{\partial c^a_{i,j}}{\partial \b{y}_j} = \frac{2ab \|\b{y}_i - \b{y}_j\|_2^{2(b-1)}}{(1 + a\, \|\b{y}_i - \b{y}_j\|_2^{2b})} (\b{y}_j - \b{y}_i), \\
& \frac{\partial c^r_{i,j}}{\partial \b{y}_j} = \frac{-2b}{(\varepsilon+\|\b{y}_i - \b{y}_j\|_2^2) (1 + a\, \|\b{y}_i - \b{y}_j\|_2^{2b})} (\b{y}_j - \b{y}_i).
\end{align*}
\end{lemma}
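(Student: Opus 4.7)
The plan is to compute all four gradients directly from the definitions $c^a_{i,j} = -\ln q_{ij}$ and $c^r_{i,j} = -\ln(1-q_{ij})$ with $q_{ij} = (1+a\|\b{y}_i-\b{y}_j\|_2^{2b})^{-1}$, organizing everything as a chain-rule calculation in the single scalar intermediate $d_{ij} := \|\b{y}_i - \b{y}_j\|_2^2$. The only vector-valued piece that ever appears is $\nabla_{\b{y}_i} d_{ij} = 2(\b{y}_i - \b{y}_j)$, and since $\nabla_{\b{y}_j} d_{ij} = -\nabla_{\b{y}_i} d_{ij}$, the $\b{y}_j$-gradients follow from the $\b{y}_i$-gradients by flipping the overall sign (which turns $\b{y}_i - \b{y}_j$ into $\b{y}_j - \b{y}_i$). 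So the real work is a pair of scalar differentiations.

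For the attractive force, I would first rewrite $c^a_{i,j} = \ln(1 + a\, d_{ij}^{b})$. Differentiating through $\ln$ and then through the power $d_{ij}^{b}$ yields
\begin{align*}
\frac{\partial c^a_{i,j}}{\partial \b{y}_i} = \frac{a \cdot b\, d_{ij}^{b-1}}{1 + a\, d_{ij}^b}\cdot 2(\b{y}_i - \b{y}_j),
\end{align*}
which is exactly the stated expression once one substitutes $d_{ij}^{b-1} = \|\b{y}_i - \b{y}_j\|_2^{2(b-1)}$ and $d_{ij}^{b} = \|\b{y}_i - \b{y}_j\|_2^{2b}$.

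For the repulsive term, I would first simplify algebraically before differentiating: since $1 - q_{ij} = a\,d_{ij}^{b}/(1 + a\,d_{ij}^{b})$, we obtain $c^r_{i,j} = -\ln(a\,d_{ij}^{b}) + \ln(1 + a\,d_{ij}^{b})$. The second piece is precisely $c^a_{i,j}$, whose gradient we already have; the first piece contributes $-\,2b(\b{y}_i - \b{y}_j)/d_{ij}$. Placing the two contributions over the common denominator $d_{ij}(1 + a\,d_{ij}^{b})$ gives the numerator $-2b(1 + a\,d_{ij}^{b}) + 2ab\,d_{ij}^{b-1}\cdot d_{ij} = -2b$, so the entire repulsive gradient collapses to
\begin{align*}
\frac{\partial c^r_{i,j}}{\partial \b{y}_i} = \frac{-2b}{d_{ij}\,(1 + a\, d_{ij}^{b})}(\b{y}_i - \b{y}_j),
\end{align*}
matching the claim except for the $\varepsilon$, which is not part of the exact derivative but is added post-hoc to the $d_{ij}$ factor to avert division by zero when $\b{y}_i \approx \b{y}_j$.

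The only subtle step is arranging the repulsive calculation so that the cancellation $-2b - 2ab\,d_{ij}^{b} + 2ab\,d_{ij}^{b} = -2b$ is visible; a naïve expansion via $\partial(1-q_{ij})/\partial \b{y}_i = -\partial q_{ij}/\partial \b{y}_i$ and direct substitution yields the same answer but hides the simplification behind heavier algebra. The $\b{y}_j$-gradients then require no further calculation: both $c^a_{i,j}$ and $c^r_{i,j}$ depend on $\b{y}_i$ and $\b{y}_j$ only through $d_{ij}$, which is symmetric in its two arguments, so the antisymmetry $\nabla_{\b{y}_j} = -\nabla_{\b{y}_i}$ applied to the displacement factor $(\b{y}_i - \b{y}_j)$ produces the remaining two formulas.
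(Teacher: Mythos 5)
Your proof is correct, and it reaches the same formulas as the paper's by a noticeably cleaner algebraic route. The paper applies the chain rule through the intermediate variable $q_{ij}$ itself: it computes $\partial q_{ij}/\partial \b{y}_i$ once, multiplies by $-1/q_{ij}$ (attractive) or $1/(1-q_{ij})$ (repulsive), and for the repulsive case then separately rewrites the numerator and denominator in terms of $q_{ij}^{-1}$ so that a factor $(q_{ij}^{-1}-1)$ cancels, leaving $-2b/\bigl(\|\b{y}_i-\b{y}_j\|_2^2(1+a\|\b{y}_i-\b{y}_j\|_2^{2b})\bigr)$. You instead take $d_{ij}=\|\b{y}_i-\b{y}_j\|_2^2$ as the sole scalar intermediate and, crucially, decompose $c^r_{i,j}=-\ln(a\,d_{ij}^b)+\ln(1+a\,d_{ij}^b)=-\ln(a\,d_{ij}^b)+c^a_{i,j}$ before differentiating, so the $-2b$ numerator emerges from a one-line cancellation rather than from manipulating $q_{ij}^{-1}$ expressions. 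What your route buys is transparency: the repulsive gradient is visibly the attractive gradient plus a $-2b(\b{y}_i-\b{y}_j)/d_{ij}$ correction, and the near-singularity at $\b{y}_i\approx\b{y}_j$ (which motivates the $\varepsilon$) is traced directly to the $-\ln(a\,d_{ij}^b)$ term; the paper's route, by contrast, reuses a single computation of $\partial q_{ij}/\partial\b{y}_i$ for both forces, which is slightly more economical if one also wants that derivative for other purposes. Your treatment of the $\b{y}_j$-gradients via the antisymmetry $\nabla_{\b{y}_j}d_{ij}=-\nabla_{\b{y}_i}d_{ij}$ and of $\varepsilon$ as a post-hoc stabilizer matches what the paper does implicitly (it simply says ``likewise'').
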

\begin{proof}
For the first equation, we have:
\begin{align}
\frac{\partial c^a_{i,j}}{\partial \b{y}_i} &= \frac{\partial c^a_{i,j}}{\partial q_{ij}} \times \frac{\partial q_{ij}}{\partial \b{y}_i} = \frac{-1}{q_{ij}} \times \Big( \frac{-1}{(1 + a\, \|\b{y}_i - \b{y}_j\|_2^{2b})^{2}} \nonumber\\
&\times 2ab (\b{y}_i - \b{y}_j) \times \|\b{y}_i - \b{y}_j\|_2^{2(b-1)} \Big) \nonumber\\
&\overset{(\ref{equation_UMAP_q})}{=} \frac{2ab \|\b{y}_i - \b{y}_j\|_2^{2(b-1)}}{(1 + a\, \|\b{y}_i - \b{y}_j\|_2^{2b})} (\b{y}_i - \b{y}_j).
\end{align}
For the second equation, we have:
\begin{align}
&\frac{\partial c^r_{i,j}}{\partial \b{y}_i} =  \nonumber\\
&\frac{\partial c^r_{i,j}}{\partial q_{ij}} \times \frac{\partial q_{ij}}{\partial \b{y}_i} = \frac{1}{1-q_{ij}} \times \Big( \frac{-1}{(1 + a\, \|\b{y}_i - \b{y}_j\|_2^{2b})^{2}} \nonumber\\
&\times 2ab (\b{y}_i - \b{y}_j) \times \|\b{y}_i - \b{y}_j\|_2^{2(b-1)} \Big) \nonumber\\
&= \frac{-2ab \|\b{y}_i - \b{y}_j\|_2^{2(b-1)}}{(1-q_{ij}) (1 + a\, \|\b{y}_i - \b{y}_j\|_2^{2b})^2} (\b{y}_i - \b{y}_j). \label{equation_proof_derivative_c_r_q}
\end{align}
The term in the numerator can be simplified as:
\begin{align*}
& -2ab \|\b{y}_i - \b{y}_j\|_2^{2(b-1)} \!= -2b (a \|\b{y}_i - \b{y}_j\|_2^{2b}) \|\b{y}_i - \b{y}_j\|_2^{-2} \\
&\overset{(\ref{equation_UMAP_q})}{=} -2b\, (q_{ij}^{-1} - 1) \|\b{y}_i - \b{y}_j\|_2^{-2}. 
\end{align*}
The term in the denominator can be simplified as:
\begin{align*}
& (1-q_{ij}) (1 + a\, \|\b{y}_i - \b{y}_j\|_2^{2b})^2 \overset{(\ref{equation_UMAP_q})}{=} (1-q_{ij}) q_{ij}^{-2} \\
& = q_{ij}^{-2} - q_{ij}^{-1} = q_{ij}^{-1} (q_{ij}^{-1} - 1).
\end{align*}
Hence, Eq. (\ref{equation_proof_derivative_c_r_q}) can be simplified as:
\begin{align*}
\frac{\partial c^r_{i,j}}{\partial \b{y}_i} &= \frac{-2b\, (q_{ij}^{-1} - 1) \|\b{y}_i - \b{y}_j\|_2^{-2}}{q_{ij}^{-1} (q_{ij}^{-1} - 1)} (\b{y}_i - \b{y}_j) \\
&= \frac{-2b}{\|\b{y}_i - \b{y}_j\|_2^2\, q_{ij}^{-1}} (\b{y}_i - \b{y}_j) \\
&\overset{(\ref{equation_UMAP_q})}{=} \frac{-2b}{\|\b{y}_i - \b{y}_j\|_2^2\, (1 + a\, \|\b{y}_i - \b{y}_j\|_2^{2b})} (\b{y}_i - \b{y}_j).
\end{align*}
If we add $\varepsilon$ for stability to the squared distance in the denominator, the equation is obtained. Q.E.D.
\end{proof}


\subsection{Supervised and Semi-supervised Embedding}\label{section_supervised_semisupervised_umap}

The explained UMAP algorithm is unsupervised. 
We can have supervised and semi-supervised embedding by UMAP \cite{sainburg2020parametric}. For supervised UMAP, we can use UMAP cost function, Eq. (\ref{equation_umap_cost2}), regularized by a classification cost function such as cross-entropy or triplet loss. 
In semi-supervised case, some part of dataset has labels and some part does not. We can iteratively alternate between UMAP's cost function, Eq. (\ref{equation_umap_cost2}), and a classification cost function. In this way, the embeddings are updated by UMAP and fine-tuned by class labels and this procedure is repeated iteratively until convergence of embedding. 


\section{Justifying UMAP's Cost Function by Algebraic Topology and Category Theory}\label{section_categoryTheory_algebraicTopology}

UMAP is a neighbor embedding method where the probability of neighbors for every point is used for optimization of embedding. However, its cost function, Eq. (\ref{equation_umap_cost}), can be justified by algebraic topology \cite{may1992simplicial,friedman2012survey} and category theory \cite{mac2013categories,riehl2017category}. Specifically, its theory heavily uses the fuzzy category theory \cite{spivak2012metric}. 
In this section, we briefly introduce the theory behind UMAP. The reader can skip this section if they do not wish to know the theory behind UMAP's cost function.

First, we introduce some concepts which will be used for theory of UMAP:
\begin{itemize}
\item A \textit{simplex} is a generalization of triangle to arbitrary dimensions. 
\item A \textit{simplicial complex} is a set of points, line segments, triangles, and their $d$-dimensional counterparts \cite{may1992simplicial}. 
\item A \textit{fuzzy set} is a mapping $\mu: \mathcal{A} \rightarrow [0,1]$ from carrier set $\mathcal{A}$ where the mapping is called the membership function \cite{zadeh1965fuzzy}. We can denote a fuzzy set by $(\mathcal{A}, \mu)$.

\item In category theory, a \textit{category} is a collection of objects which are linked by arrows. For example, objects can be sets and the arrows can be functions between sets \cite{mac2013categories}. 
A \textit{morphism} is a mapping from a mathematical structure to another structure without changing type (e.g., morphisms are functions in set theory). 
A \textit{functor} is defined as a mapping between categories. 
\textit{Adjunction} is a relation between two functors. The two functors having an adjunction are \textit{adjoint} functors, one of which is the \textit{left adjoint} and the other is the \textit{right adjoint}. 

\item A \textit{topology} is a geometrical object which is still preserved by continuous deformations such as stretching and twisting but without tearing and making or closing holes. A \textit{topological space} is a set of topologies whose operations are continuous deformations. 

\item We define the category $\b{\Delta}$ whose objects are finite order sets $[n] = \{1, \dots, n\}$ with order-preserving maps as its morphisms {\citep[Definition 1]{mcinnes2018umap}}. 
A simplicial set is a functor from $\b{\Delta}$ to the category of sets {\citep[Definition 2]{mcinnes2018umap}}. 

\item Consider a category of fuzzy sets, denoted by \textbf{Fuzz} {\citep[Definition 4]{mcinnes2018umap}}. The category of fuzzy simplicial sets, denoted by \textbf{sFuzz}, is the category of objects with functors from $\b{\Delta}$ to \textbf{Fuzz} and natural transformations as its morphisms {\citep[Definition 5]{mcinnes2018umap}}. 

\item An extended-pseudo-metric space is a set $\mathcal{X}$ and a mapping $d: \mathcal{X} \times \mathcal{X} \rightarrow \mathbb{R}_{\geq 0} \cup \{\infty\}$ where for $x,y \in \mathcal{X}$, we have $d(x,y) \geq 0$ and $x=y \implies d(x,y)=0$ and $d(x,y) = d(y,x)$ and $d(x,z) \leq d(x,y) + d(y,z)$ {\citep[Definition 6]{mcinnes2018umap}}. The mapping $d$ can be seen as a distance metric or pseudo-metric.

\item Let \textbf{Fin-sFuzz} be the sub-category of bounded fuzzy simplicial sets. Let \textbf{FinEPMet} be the sub-category of finite extended-pseudo-metric spaces. 

\end{itemize}

\begin{theorem}[{\citep[Theorem 1]{mcinnes2018umap}}]
The functors \textbf{FinReal}: \textbf{Fin-sFuzz} $\rightarrow$ \textbf{FinEPMet} and \textbf{FinSing}: \textbf{FinEPMet} $\rightarrow$ \textbf{Fin-sFuzz} form an adjunction with \textbf{FinReal} and \textbf{FinSing} as the left and right adjoints. 
\end{theorem}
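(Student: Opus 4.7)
The plan is to establish the adjunction by the standard Yoneda / density route used for the classical geometric realisation versus singular simplicial set pair, adapted to the fuzzy setting. Because every object in \textbf{Fin-sFuzz} is a (finite) colimit of standard (representable) fuzzy simplices $\Delta^n_{<a}$, and because the hom functor in a complete category turns colimits in the first slot into limits in the first slot, it is enough to verify the natural isomorphism
\begin{align*}
\mathrm{Hom}_{\textbf{FinEPMet}}\!\big(\textbf{FinReal}(X),\,Y\big) \;\cong\; \mathrm{Hom}_{\textbf{Fin-sFuzz}}\!\big(X,\,\textbf{FinSing}(Y)\big)
\end{align*}
on the generators $X=\Delta^n_{<a}$ and then propagate it through colimits.

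First I would fix explicit formulas on the generators. On the standard fuzzy simplex $\Delta^n_{<a}$ with membership strength $a\in(0,1]$, \textbf{FinReal} is defined to give the extended-pseudo-metric space on $n+1$ points whose pairwise distances equal $-\log a$ (strength $a=1$ collapses to a point, $a\to 0$ separates points to infinity; this matches the behaviour required to view a fuzzy simplex as encoding a scale). Dually, \textbf{FinSing}$(Y)$ is defined as the fuzzy simplicial set whose $([n],a)$-elements are the non-expansive (equivalently, distance-bounded by $-\log a$) maps from the $n+1$-point uniform metric space of radius $-\log a$ into $Y$. With these definitions in hand, the bijection at the generators is essentially a tautology: a non-expansive map out of \textbf{FinReal}$(\Delta^n_{<a})$ is exactly an element of \textbf{FinSing}$(Y)$ at level $([n],a)$, and one checks naturality in both $[n]$ (order-preserving maps) and in $a$ (refinement of strengths).

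Second I would extend to arbitrary $X\in\textbf{Fin-sFuzz}$. Writing $X$ as the colimit $X \cong \operatorname{colim}_{\Delta^n_{<a}\to X} \Delta^n_{<a}$ over its category of elements (a finite colimit because we are in the bounded sub-category), one sets $\textbf{FinReal}(X)$ to be the corresponding colimit in \textbf{FinEPMet}. Then
\begin{align*}
\mathrm{Hom}_{\textbf{FinEPMet}}(\textbf{FinReal}(X),Y) &\cong \lim \mathrm{Hom}_{\textbf{FinEPMet}}(\textbf{FinReal}(\Delta^n_{<a}),Y) \\
&\cong \lim \mathrm{Hom}_{\textbf{Fin-sFuzz}}(\Delta^n_{<a},\textbf{FinSing}(Y)) \\
&\cong \mathrm{Hom}_{\textbf{Fin-sFuzz}}(X,\textbf{FinSing}(Y)),
\end{align*}
the last step being the Yoneda/co-Yoneda identification. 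Naturality in $Y$ is automatic from naturality on generators, which reduces to a diagram chase on order-preserving maps $[m]\to[n]$ and strength refinements $a'\le a$.

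The main obstacle is the fuzzy bookkeeping in the colimit step: one must confirm that colimits of the relevant finite extended-pseudo-metric spaces actually exist in \textbf{FinEPMet} (gluing point sets and taking the largest pseudo-metric bounded above by the combined generating distances), and that \textbf{FinReal} really is the left Kan extension of its restriction to representables along the Yoneda embedding. Once completeness and cocompleteness of \textbf{FinEPMet} on finite carriers are settled and the generator-level bijection is verified, the adjunction follows, with \textbf{FinReal} the left and \textbf{FinSing} the right adjoint as claimed.
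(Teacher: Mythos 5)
The paper does not actually prove this theorem; it simply defers to Appendix B of \citet{mcinnes2018umap}. Your sketch follows the same nerve--realization strategy used there (define \textbf{FinReal} on the representable fuzzy simplices $\Delta^n_{<a}$ via the uniform metric space with pairwise distance $-\log a$, define \textbf{FinSing} as the corresponding hom-functor, verify the bijection on generators, and extend by left Kan extension along the Yoneda embedding, checking that the needed colimits exist in \textbf{FinEPMet}), so it is essentially the same approach as the cited proof.
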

\begin{proof}
Proof is available in {\citep[Appendix B]{mcinnes2018umap}}. 
\end{proof}

The above theorem shows that we can convert an extended-pseudo-metric space to a fuzzy simplicial set and vice versa. In other words, we have a fuzzy simplicial representation of data space. 
Hence, we have the following corollary. 

\begin{corollary}[Fuzzy Topological Representation {\citep[Definition 9]{mcinnes2018umap}}]
Consider a dataset $\mathcal{X} := \{\b{x}_i \in \mathbb{R}^d\}_{i=1}^n$ lying on an underlying manifold $\mathcal{M}$. Let $\{(\mathcal{X}, d_i)\}_{i=1}^n$ be a family of extended-pseudo-metric spaces with common carrier set $\mathcal{X}$ such that:
\begin{align}
d_i(\b{x}_j, \b{x}_l) := 
\left\{
    \begin{array}{ll}
        d_\mathcal{M}(\b{x}_j, \b{x}_l) - \rho_i & \mbox{if } i=j \text{ or } i=l, \\
        \infty & \mbox{Otherwise},
    \end{array}
\right.
\end{align}
where $d_\mathcal{M}(.,.)$ is the geodesic (shortest) distance on manifold and $\rho_i$ is the distance to the nearest neighbor of $\b{x}_i$ (see Eq. (\ref{equation_umap_rho})). 
The \textit{fuzzy topological representation} of dataset $\mathcal{X}$ is:
\begin{align*}
\bigcup_{i=1}^n \textbf{FinSing}((\mathcal{X}, d_i)),
\end{align*}
where $\bigcup$ is the fuzzy set union. 
\end{corollary}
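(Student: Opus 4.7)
The plan is to derive the claimed representation by a direct application of Theorem~1 to each of the local metric spaces $(\mathcal{X}, d_i)$ and then to assemble the results by fuzzy set union. The corollary is essentially a construction made legitimate by the adjunction, so the argument is a short chain of well-definedness checks rather than a deep computation.

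First, I would verify that each pair $(\mathcal{X}, d_i)$ is a bona fide finite extended-pseudo-metric space, i.e.\ an object of \textbf{FinEPMet}. Non-negativity and the reflexivity condition $\b{x}_j=\b{x}_l \Rightarrow d_i(\b{x}_j,\b{x}_l)=0$ follow from the definition of $\rho_i$ in Eq.~(\ref{equation_umap_rho}), together with the convention that distances not involving the index $i$ are set to $+\infty$, which is admissible in an extended-pseudo-metric. Symmetry is immediate because $d_\mathcal{M}$ is symmetric and the shift $-\rho_i$ depends only on $i$. The triangle inequality requires a short case analysis on which (if any) of the three indices equals $i$: whenever the argument requiring finiteness reduces to a triangle inequality internal to $d_\mathcal{M}$, it is inherited from the manifold metric; otherwise one of the terms is $\infty$ and the inequality is trivial.

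Second, Theorem~1 supplies the right-adjoint functor $\textbf{FinSing}\colon \textbf{FinEPMet} \to \textbf{Fin-sFuzz}$. Applying it to each $(\mathcal{X}, d_i)$ yields a bounded fuzzy simplicial set $\textbf{FinSing}((\mathcal{X}, d_i))$. Because $d_i$ is infinite on every pair not containing $\b{x}_i$, only simplices incident to $\b{x}_i$ receive non-trivial membership, so the object records precisely the local fuzzy geometry visible from $\b{x}_i$. Third, I would form the global representation by taking the fuzzy union $\bigcup_{i=1}^n \textbf{FinSing}((\mathcal{X}, d_i))$, which is well defined since \textbf{Fin-sFuzz} inherits finite unions from \textbf{Fuzz} (pointwise supremum of membership values on each simplex). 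This amalgamates the $n$ local views into a single fuzzy simplicial set, which by definition is the fuzzy topological representation of $\mathcal{X}$.

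The main obstacle is conceptual rather than computational: one must justify why subtracting $\rho_i$ is the correct normalization to make the per-point local metric honest. The intuition is that rescaling so that each point's nearest neighbor lies at distance zero makes the local geometry behave as if the data were uniformly distributed with respect to a normalized Riemannian volume form on $\mathcal{M}$, which is exactly the uniform-manifold assumption that gave UMAP its name. Without this shift the local fuzzy memberships would be distorted by density variations across $\mathcal{X}$, and the fuzzy union would fail to patch together into a coherent global structure; with it, each $\textbf{FinSing}((\mathcal{X}, d_i))$ is measured against a common effective scale, and the union yields the desired representation.
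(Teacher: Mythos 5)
Your proposal matches the paper's (implicit) justification exactly: the paper presents this corollary as a direct consequence of Theorem~1, namely that \textbf{FinSing} converts each local extended-pseudo-metric space $(\mathcal{X}, d_i)$ into a bounded fuzzy simplicial set, and the fuzzy union of these $n$ local views is then declared to be the global fuzzy topological representation, which is precisely your three-step chain. The only minor gloss is your non-negativity check (as literally written the formula gives $d_i(\b{x}_i,\b{x}_i) = -\rho_i$), but this wart is inherited from the original definition and does not change the argument.
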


UMAP creates a fuzzy topological representation for the high-dimensional dataset. Then, it initializes a low-dimensional embedding of dataset and creates a fuzzy topological representation for the low-dimensional embedding of dataset. Then, it tries to modify the low-dimensional embedding of dataset in a way that the fuzzy topological representation of embedding becomes similar to the fuzzy topological representation of high-dimensional data. A measure of difference of two fuzzy topological representations $(\mathcal{A},\mu_1)$ and $(\mathcal{A},\mu_2)$ is their cross-entropy defined as \cite{mcinnes2018umap}:
\begin{align}\label{equation_fuzzy_cross_entropy}
&c\big((\mathcal{A},\mu_1), (\mathcal{A},\mu_2)\big) := \nonumber\\
&\sum_{a \in \mathcal{A}} \Big(\mu_1(a) \ln\Big(\frac{\mu_1(a)}{\mu_2(a)}\Big) + (1-\mu_1(a)) \ln\Big(\frac{1-\mu_1(a)}{1-\mu_2(a)}\Big)\Big).
\end{align}
UMAP minimizes this cross-entropy by changing the embedding iteratively. Hence, it uses cross-entropy as its cost function, i.e., Eq. (\ref{equation_umap_cost}).  

\section{Neighbor Embedding: Comparison with t-SNE and LargeVis}\label{section_compare_tSNE_LargeVis}

UMAP has a connection with t-SNE \cite{maaten2008visualizing,ghojogh2020stochastic} and LargeVis \cite{tang2016visualizing}.
This connection is explained in {\citep[Appendix C]{mcinnes2018umap}}. 
In fact, all UMAP, t-SNE, and LargeVis are neighbor embedding methods in which attractive and repulsive forces are used \cite{bohm2020unifying}. As was explained before, for every point considered as anchor, attractive forces are used for pushing neighbor (also called positive) points to anchor and repulsive forces are used for pulling non-neighbor (also called negative) points away from the anchor points. Note that the ideas of triplet and contrastive losses as well as Fisher discriminant analysis are the same \cite{ghojogh2020fisher}. 
An empirical comparison of UMAP and t-SNE is also available in \cite{repke2021robust}. 

\textbf{-- Comparison of probabilities:}
In t-SNE, the probabilities in the input and embedding spaces are \cite{ghojogh2020stochastic}:
\begin{align}
&p_{j|i} := \frac{\exp\big(\!-\frac{\|\b{x}_i - \b{x}_j\|_2}{\sigma_i}\big)}{\sum_{k=1, k \neq i}^n \exp\big(\!-\frac{\|\b{x}_i - \b{x}_k\|_2}{\sigma_i}\big)}, \label{equation_tSNE_p_Directional} \\
&p_{ij} := \frac{p_{j|i} + p_{i|j}}{2n}, \label{equation_tSNE_p} \\
&q_{ij} := \frac{(1+\|\b{y}_i - \b{y}_j\|_2^2)^{-1}}{\sum_{k=1}^n \sum_{l=1, l \neq k}^{n} (1+\|\b{y}_k - \b{y}_l\|_2^2)^{-1}}, \label{equation_tSNE_q}
\end{align}
where $p_{i|i} = 0, \forall i$.
The $p_{j|i}$ probabilities can be computed for $k$NN graph where $p_{j|i}$ is set to zero for non-neighbor points in the $k$NN graph.
LargeVis uses the same $p_{ij}$ probabilities as t-SNE but approximates the $k$NN to compute it very fast and become more efficient. In LargeVis, the probability in the embedding space is:
\begin{align}
&q_{ij} := (1+\|\b{y}_i - \b{y}_j\|_2^2)^{-1}. \label{equation_LargeVis_q}
\end{align}

Comparing Eqs. (\ref{equation_UMAP_p_Directional}) and (\ref{equation_tSNE_p_Directional}) shows that UMAP, t-SNE, and LargeVis all use Gaussian or RBF kernel for probabilities in the input space.
Comparing Eqs. (\ref{equation_UMAP_p}) and (\ref{equation_tSNE_p}) shows that UMAP and t-SNE/LargeVis use different approaches for symmetrizing the probabilities in the input space. 
Comparing Eqs. (\ref{equation_UMAP_q}), (\ref{equation_tSNE_q}), and (\ref{equation_LargeVis_q}) shows that, in contrast to t-SNE, UMAP and LargeVis do not normalize the probabilities in the embedding space by all pairs of points. This advantage makes UMAP much faster than t-SNE and also makes it more suitable for mini-batch optimization in deep learning (this will be explained more in Section \ref{section_parametric_UMAP}). 
Comparing Eqs. (\ref{equation_UMAP_q}), (\ref{equation_tSNE_q}), and (\ref{equation_LargeVis_q}) also shows that UMAP, t-SNE, and LargeVis all use Cauchy distribution for probabilities in the embedding space. In fact if we set $a=b=1$ in Eq. (\ref{equation_UMAP_q}), it is exactly the same as Eq. (\ref{equation_tSNE_q}) up to the scale of normalization. 

\textbf{-- Comparison of cost functions:}
The cost function in t-SNE, to be minimized, is the KL-divergence \cite{kullback1951information} of the probabilities in the input and embedding spaces:
\begin{equation}\label{equation_tSNE_cost}
\begin{aligned}
c_4 &:= \sum_{i=1}^n \sum_{j=1, j \neq i}^n p_{ij} \ln(\frac{p_{ij}}{q_{ij}}) \\
&= \sum_{i=1}^n \sum_{j=1, j \neq i}^n \big( p_{ij} \ln(p_{ij}) - p_{ij} \ln(q_{ij}) \big),
\end{aligned}
\end{equation}
where Eqs. (\ref{equation_tSNE_p}) and (\ref{equation_tSNE_q}) are used. 
The cost function of LargeVis, to be minimized, is a negative likelihood function stated below:
\begin{equation}\label{equation_LargeVis_cost}
\begin{aligned}
c_5 &:= -\sum_{i=1}^n \sum_{j=1, j \neq i}^n \big(p_{ij} \ln(q_{ij}) + \lambda \ln(1 - q_{ij}) \big),
\end{aligned}
\end{equation}
where $\lambda$ is the regularization parameter and Eqs. (\ref{equation_tSNE_p}) and (\ref{equation_LargeVis_q}) are used. 
Comparing Eqs. (\ref{equation_umap_cost2}), (\ref{equation_tSNE_cost}), and (\ref{equation_LargeVis_cost}) shows that UMAP, t-SNE, and LargeVis have similar, but not exactly equal, cost functions. 
The first term in all these cost functions is responsible for the attractive forces and the second term is for the repulsive forces; hence, they can all be considered as neighbor embedding methods \cite{bohm2020unifying}. 

\section{Discussion on Repulsive Forces and Negative Sampling in the UMAP's Cost Function}\label{section_discussion_repulsive_forces}

\subsection{UMAP's Emphasis on Repulsive Forces}

The gradient of UMAP's cost function, Eq. (\ref{equation_umap_cost2_withAttractiveAndRepulsiveForces}), in epoch $\nu$ is denoted by $(\partial c_2 / \partial \b{y}_i) |_\nu$ and it can be stated as \cite{damrich2021umap}:
\begin{align}\label{equation_umap_cost_epoch}
\frac{\partial c_2}{\partial \b{y}_i}\bigg|_\nu = \sum_{j=1}^n \Big( \mathbb{I}_{ij}^\nu\, \frac{\partial c_{i,j}^a}{\partial \b{y}_i} + \mathbb{I}_{ji}^\nu\, \frac{\partial c_{j,i}^a}{\partial \b{y}_i} + \mathbb{I}_{ij}^\nu\, \sum_{l=1}^n \mathbb{I}_{ijl}^\nu\, \frac{\partial c_{i,l}^r}{\partial \b{y}_i} \Big),
\end{align}
where $c_{i,j}^a$ and $c_{i,j}^r$ are defined in Eqs. (\ref{equation_umap_cost2_attractive}) and (\ref{equation_umap_cost2_attractive}), respectively, and $\mathbb{I}_{ij}^\nu$ is a binary random variable which is one if the points $\b{y}_i$ and $\b{y}_j$ are randomly selected in epoch $\nu$ and otherwise it is zero. Also, $\mathbb{I}_{ijl}^\nu$ is a binary random variable which is one if the point $\b{y}_l$ is one of the negative samples which is randomly sampled for the pair $\b{y}_i$ and $\b{y}_j$ in epoch $\nu$ and otherwise it is zero. 
Recall that the original UMAP does not update the embedding of negative sample itself so we do not have a term for that update in this gradient. 

Eq. (\ref{equation_umap_cost_epoch}) is only for one epoch. 
The expectation of Eq. (\ref{equation_umap_cost_epoch}) over all epochs is \cite{damrich2021umap}:
\begin{align}
&\mathbb{E}\Big[\frac{\partial c_2}{\partial \b{y}_i}\Big|_\nu\Big] \nonumber\\
&= \mathbb{E}\Big[\sum_{j=1}^n \Big( \mathbb{I}_{ij}^\nu\, \frac{\partial c_{i,j}^a}{\partial \b{y}_i} + \mathbb{I}_{ji}^\nu\, \frac{\partial c_{j,i}^a}{\partial \b{y}_i} + \mathbb{I}_{ij}^\nu\, \sum_{l=1}^n \mathbb{I}_{ijl}^\nu\, \frac{\partial c_{i,l}^r}{\partial \b{y}_i} \Big)\Big] \nonumber\\
&\overset{(a)}{=} \sum_{j=1}^n \Big( \mathbb{E}[\mathbb{I}_{ij}^\nu]\, \frac{\partial c_{i,j}^a}{\partial \b{y}_i} + \mathbb{E}[\mathbb{I}_{ji}^\nu]\, \frac{\partial c_{j,i}^a}{\partial \b{y}_i}\Big) \nonumber\\
&~~~~~~~+ \sum_{j=1}^n \sum_{l=1}^n \mathbb{E}[\mathbb{I}_{ij}^\nu\, \mathbb{I}_{ijl}^\nu]\, \frac{\partial c_{i,l}^r}{\partial \b{y}_i} \nonumber\\
&\overset{(b)}{=} \sum_{j=1}^n \Big( p_{ij} \frac{\partial c_{i,j}^a}{\partial \b{y}_i} + p_{ji} \frac{\partial c_{j,i}^a}{\partial \b{y}_i}\Big) + \sum_{j=1}^n \sum_{l=1}^n p_{ij} \frac{m}{n} \frac{\partial c_{i,l}^r}{\partial \b{y}_i} \nonumber
\end{align}
\begin{align}
&\overset{(c)}{=} \sum_{j=1}^n \Big( p_{ij} \frac{\partial c_{i,j}^a}{\partial \b{y}_i} + p_{ji} \frac{\partial c_{j,i}^a}{\partial \b{y}_i}\Big) + \frac{m}{n} \underbrace{\sum_{j=1}^n p_{ij}}_{=\, d_i} \sum_{l=1}^n \frac{\partial c_{i,l}^r}{\partial \b{y}_i} \nonumber\\
&\overset{(d)}{=} \sum_{j=1}^n \Big( p_{ij} \frac{\partial c_{i,j}^a}{\partial \b{y}_i} + p_{ji} \frac{\partial c_{j,i}^a}{\partial \b{y}_i}\Big) + \frac{d_i m}{n} \sum_{j=1}^n \frac{\partial c_{i,j}^r}{\partial \b{y}_i} \nonumber\\
&\overset{(e)}{=} 2 \sum_{j=1}^n \Big( p_{ij} \frac{\partial c_{i,j}^a}{\partial \b{y}_i} + \frac{d_i m}{2n} \frac{\partial c_{i,j}^r}{\partial \b{y}_i} \Big), \label{equation_umap_gradient_cost_epoch}
\end{align}
where $(a)$ is because expectation is a linear operator, $(b)$ is because $\mathbb{E}[\mathbb{I}_{ij}^\nu] = p_{ij}$ and $\mathbb{E}[\mathbb{I}_{ij}^\nu\, \mathbb{I}_{ijl}^\nu] = \mathbb{E}[\mathbb{I}_{ijl}^\nu\, |\, \mathbb{I}_{ij}^\nu] \times \mathbb{E}[\mathbb{I}_{ij}^\nu] = \frac{m}{n} \times p_{ij}$, $(c)$ is because we define the degree of the $i$-th point (node) in the $k$NN graph as $d_i := \sum_{j=1}^n p_{ij}$, $(d)$ is because we change the dummy variable $l$ to $j$ in the last summation, and $(e)$ is because $p_{ij} = p_{ji}$ and $c_{i,j}^a = c_{j,i}^a$ are symmetric. 

On the other hand, according to Eq. (\ref{equation_umap_cost2_withAttractiveAndRepulsiveForces_2}), the gradient of UMAP's cost function, Eq. (\ref{equation_umap_cost2}), can be stated as \cite{damrich2021umap}:
\begin{align}\label{equation_umap_gradient_withAttractiveRepulsiveDerivatives}
\frac{\partial c_2}{\partial \b{y}_i} = 2\sum_{j=1}^n \Big(p_{ij}\, \frac{\partial c^a_{i,j}}{\partial \b{y}_i} + (1 - p_{ij})\, \frac{\partial c^r_{i,j}}{\partial \b{y}_i}\Big).
\end{align}
Eq. (\ref{equation_umap_gradient_withAttractiveRepulsiveDerivatives}) is the gradient of the original UMAP's loss function while Eq. (\ref{equation_umap_gradient_cost_epoch}) is the expected gradient of UMAP's loss function.
Comparing Eqs. (\ref{equation_umap_gradient_cost_epoch}) and (\ref{equation_umap_gradient_withAttractiveRepulsiveDerivatives}) shows that the original UMAP puts more emphasis on negative samples (or repulsive forces) compared to the expected UMAP's loss function because for a negative sample we have $1-p_{ij} \approx 1$ while $d_i m /n \approx 0$ because $m \ll n$. Therefore, UMAP is mistakenly putting more emphasis on negative sampling (or repulsive forces) than required \cite{damrich2021umap}. This has also been empirically investigated in \cite{bohm2020unifying} that negative sampling (or repulsive forces) in UMAP has more weight than required. 

\subsection{UMAP's Effective Cost Function}

The original UMAP does not update the embedding of negative samples themselves. If we also update them, i.e. we perform line \ref{algorithm_umap_update_negativeSample} in Algorithm \ref{algorithm_umap}, the gradient of UMAP's cost function, Eq. (\ref{equation_umap_cost2_withAttractiveAndRepulsiveForces}), in epoch $\nu$ can be stated as \cite{damrich2021umap}:
\begin{align}\label{equation_umap_cost_epoch_withNegativeSampleUpdate}
\frac{\partial c_2}{\partial \b{y}_i}\bigg|_\nu = &\sum_{j=1}^n \Big( \mathbb{I}_{ij}^\nu\, \frac{\partial c_{i,j}^a}{\partial \b{y}_i} + \mathbb{I}_{ji}^\nu\, \frac{\partial c_{j,i}^a}{\partial \b{y}_i} \nonumber\\
&+ \mathbb{I}_{ij}^\nu\, \sum_{l=1}^n \mathbb{I}_{ijl}^\nu\, \frac{\partial c_{i,l}^r}{\partial \b{y}_i} + \sum_{k=1}^n \mathbb{I}_{jk}^\nu\, \mathbb{I}_{jki}^\nu\, \frac{\partial c_{j,i}^r}{\partial \b{y}_i} \Big).
\end{align}
If we do reverse engineering to find the cost function from its gradient, the cost function at epoch $\nu$ becomes:
\begin{align}
c_2 \big|_\nu := \sum_{i=1}^n \sum_{j=1, j\neq i}^n \Big( \mathbb{I}_{ij}^\nu\, c_{i,j}^a + \sum_{l=1}^n \mathbb{I}_{ij}^\nu\, \mathbb{I}_{ijl}^\nu\, c_{i,l}^r \Big).
\end{align}
This is the cost at one epoch. The expectation of this cost over all epochs is \cite{damrich2021umap}:
\begin{align}
c_2 &= \mathbb{E}[c_2 \big|_\nu] \nonumber\\
&= \sum_{i=1}^n \sum_{j=1, j\neq i}^n \Big( \mathbb{E}[\mathbb{I}_{ij}^\nu]\, c_{i,j}^a + \sum_{l=1}^n \mathbb{E}[\mathbb{I}_{ij}^\nu\, \mathbb{I}_{ijl}^\nu]\, c_{i,l}^r \Big) \nonumber\\
&= \sum_{i=1}^n \sum_{j=1, j\neq i}^n \Big( \mathbb{E}[\mathbb{I}_{ij}^\nu]\, c_{i,j}^a \Big) \nonumber\\
&~~~~~~~~~~~~~~ + \sum_{i=1}^n \sum_{j=1, j\neq i}^n  \sum_{l=1}^n \Big( \mathbb{E}[\mathbb{I}_{ij}^\nu\, \mathbb{I}_{ijl}^\nu]\, c_{i,l}^r \Big) \nonumber\\
&\overset{(a)}{=} \sum_{i=1}^n \sum_{j=1, j\neq i}^n \Big( p_{ij}\, c_{i,j}^a \Big) + \sum_{i=1}^n \sum_{j=1, j\neq i}^n  \sum_{l=1}^n \Big( \frac{m}{n} p_{ij}\, c_{i,l}^r \Big) \nonumber\\
&= \sum_{i=1}^n \sum_{j=1, j\neq i}^n \Big( p_{ij}\, c_{i,j}^a \Big) \nonumber\\
&~~~~~~~~~~~~~~ + \frac{m}{n} \Big(\sum_{i=1}^n\underbrace{\sum_{j=1}^n p_{ij}}_{=d_i} + \sum_{j=1}^n\underbrace{\sum_{i=1}^n p_{ji}}_{=d_j} \Big) \sum_{l=1}^n c_{i,l}^r \nonumber\\
&\overset{(b)}{=} \sum_{i=1}^n \sum_{j=1,j \neq i}^n \Big( p_{ij}\, c_{i,j}^a \Big) + \sum_{i=1}^n \sum_{j=i+1}^n \frac{(d_i+d_j) m}{n} c_{i,j}^r \nonumber\\
&\overset{(c)}{=} 2 \sum_{i=1}^n \sum_{j=i+1}^n \Big( p_{ij}\, c_{i,j}^a + \frac{(d_i+d_j) m}{2n} c_{i,j}^r \Big), \label{equation_umap_effective_loss}
\end{align}
where $(a)$ is because $\mathbb{E}[\mathbb{I}_{ij}^\nu] = p_{ij}$ and $\mathbb{E}[\mathbb{I}_{ij}^\nu\, \mathbb{I}_{ijl}^\nu] = \mathbb{E}[\mathbb{I}_{ijl}^\nu\, |\, \mathbb{I}_{ij}^\nu] \times \mathbb{E}[\mathbb{I}_{ij}^\nu] = \frac{m}{n} \times p_{ij}$, $(b)$ is because we change the dummy variable $l$ to $j$ in the last summation, and $(c)$ is because of symmetry of terms in the fist summations with respect to $i$ and $j$. 
Eq. (\ref{equation_umap_effective_loss}) can be considered as UMAP's effective cost function \cite{damrich2021umap} because it also updates the embedding of negative samples by line \ref{algorithm_umap_update_negativeSample} in Algorithm \ref{algorithm_umap}. 

Comparing UMAP's cost, Eq. (\ref{equation_umap_cost2_withAttractiveAndRepulsiveForces_2}), with UMAP's effective cost, Eq. (\ref{equation_umap_effective_loss}), shows that the weight of negative sample (or repulsive forces) should be $\frac{(d_i+d_j) m}{2n}$ rather than $(1-p_{ij})$ if we also update the embeddings of negative samples. As we have $m \ll n$ and $p_{ij}$ is small for negative samples, this weight is much less than the weight in the original UMAP. 

\section{DensMAP for Density-Preserving Embedding}\label{section_DensMAP}

As was explained before, UMAP uses a binary search for the scale  of each point, $\sigma_i$, to satisfy Eq. (\ref{equation_umap_sigma}), so as t-SNE \cite{maaten2008visualizing} which has a similar search for its scale using entropy as perplexity. The search makes the neighborhoods of various points behave similarly so UMAP and t-SNE both assume that points are uniformly distributed on an underlying low-dimensional manifold. Hence, UMAP ignores the density of data around every point by canceling the effect of density with binary search for scales of points. DensMAP \cite{narayan2020density} regularizes the cost function of UMAP to take into account and add back the information of density around each point. It is shown empirically that this consideration of density information results in better embedding \cite{narayan2020density} although we will have more computation for calculation of the regularization term. 

If the neighbors of a point are very close to it, that region is dense for that point. Therefore, a measure of local density can be the local radius defined as the expected (average) distances from neighbors. We denote the local densities in the input and embedding spaces by:
\begin{align}
& R_p(\b{x}_i) := \mathbb{E}_{j\sim p}\big[\|\b{x}_i - \b{x}_j\|_2^2\big] = \frac{\sum_{j=1}^n p_{ij} \|\b{x}_i - \b{x}_j\|_2^2}{\sum_{j=1}^n \|\b{x}_i - \b{x}_j\|_2^2}, \\
& R_q(\b{y}_i) := \mathbb{E}_{j\sim q}\big[\|\b{y}_i - \b{y}_j\|_2^2\big] = \frac{\sum_{j=1}^n q_{ij} \|\b{y}_i - \b{y}_j\|_2^2}{\sum_{j=1}^n \|\b{y}_i - \b{y}_j\|_2^2}.
\end{align}
As the volume of points is proportional to the powers of radius (e.g., notice that the volume of three dimensional sphere is proportional to radius to the power three), the relation of local densities in the input and embedding spaces can be:
\begin{align}
R_q(\b{y}_i) = \alpha\, \big(R_p(\b{x}_i)\big)^\beta \implies r_q^i = \beta\, r_p^i + \gamma,
\end{align}
where $r_q^i := \ln(R_q(\b{y}_i))$, $r_p^i := \ln(R_p(\b{y}_i))$, and $\gamma := \ln(\alpha)$. Therefore, the relation of logarithms of the local densities should be affine dependence. A measure of linear (or affine) dependence is correlation so we use the correlation of logarithms of local densities:
\begin{align}
\text{Corr}(r_q,r_p) := \frac{\text{Cov}(r_q, r_p)}{\sqrt{\text{Var}(r_q) \text{Var}(r_p)}},
\end{align}
where the covariance and variance of densities are:
\begin{align*}
& \text{Cov}(r_q, r_p) := \frac{1}{n-1} \sum_{i=1}^n \Big[(r_q^i - \mu_q) (r_p^i - \mu_p) \Big] \\
& \text{Var}(r_q) := \frac{1}{n-1} \sum_{i=1}^n (r_q^i - \mu_q)^2, 
\end{align*}
where $\mu_q := (1/n) \sum_{j=1}^n r_q^j$, $\mu_p := (1/n) \sum_{j=1}^n r_p^j$, and $\text{Var}(r_p)$ is defined similarly. 
The cost function of densMAP, to be minimized, is the UMAP's cost function regularized by maximization of the correlation of local densities \cite{narayan2020density}:
\begin{align}
c_6 := c_2 - \lambda\, \text{Corr}(r_q,r_p),
\end{align}
where $\lambda$ is the regularization parameter which weights the correlation compared to the UMAP's original cost. 
The gradient of $c_2$ is the gradient of original UMAP discussed before. The gradient of the correlation term is \cite{narayan2020density}:
\begin{align}
\frac{\partial c_6}{\partial \b{y}_i} = \sum_{i=1}^n \sum_{j=1, j \neq i}^n \frac{\partial \text{Corr}(r_q,r_p)}{\partial d_{ij}^2} (y_i - y_j), 
\end{align}
where $d_{ij}^2 := \|\b{y}_i - \b{y}_j\|_2^2$ and:
\begin{align*}
& \frac{\partial \text{Corr}(r_q,r_p)}{\partial d_{ij}^2} = \frac{1}{(n-1) \text{Var}(r_q)^{(3/2)}} \times \\
&~~~~~~~ \bigg[\text{Var}(r_q)\big(r_p^i \frac{\partial r_q^i}{\partial d_{ij}^2} + r_p^j \frac{\partial r_q^j}{\partial d_{ij}^2}\big) \\
&~~~~~~~ - \text{Cov}(r_q, r_p)\big((r_q^i - \mu_q) \frac{\partial r_q^i}{\partial d_{ij}^2} + (r_p^j - \mu_q) \frac{\partial r_q^j}{\partial d_{ij}^2}\big)\bigg],
\end{align*}
and:
\begin{align*}
& \frac{\partial r_q^i}{\partial d_{ij}^2} = (1 + a d_{ij}^{2b})^{-2} \Big[ab d_{ij}^{2(b-1)} + e^{-r_q^i} (1 + a(1-b) d_{ij}^2)\Big].
\end{align*}
Proofs of these derivatives are available in {\citep[Supplementary Note 2]{narayan2020density}}. As in UMAP, DensMAP uses stochastic gradient descent for optimization. Except for the cost function, the algorithm of DensMAP is the same as UMAP. 

\SetAlCapSkip{0.5em}
\IncMargin{0.8em}
\begin{algorithm2e}[!t]
\DontPrintSemicolon
    \textbf{Input}: Training data $\{\b{x}_i\}_{i=1}^n$, learning rate $\eta$\;
    Construct $k$NN graph\;
    Initialize weights $\theta$ of network randomly\;
    Initialize $\{\b{y}_i\}_{i=1}^n \gets \{f_{\theta}(\b{x}_i)\}_{i=1}^n$\;
    Calculate $p_{ij}$ and $q_{ij}$ for $\forall i,j \in \{1, \dots, n\}$ by Eqs. (\ref{equation_UMAP_p}) and (\ref{equation_UMAP_q})\;
    // make batches:\;
    \For{$s$ from $1$ to $\lfloor n/b \rfloor$}{
        $\mathcal{B}_s \gets \{\b{x}_{(s-1)b+1}, \dots, \b{x}_{sb}\}$\;
        // We denote $\mathcal{B}_s = \{\b{x}_1^{(s)}, \dots, \b{x}_b^{(s)}\}$\;
    }
    $\nu \gets 0$\;
    \While{not converged}{
        $\nu \gets \nu + 1$ \quad // epoch index\; 
        // optimize over every mini-batch:\;
        \For{$s$ from $1$ to $\lfloor n/b \rfloor$}{
            $\{\b{y}_i^{(s)}\}_{i=1}^b \gets \{f_{\theta}(\b{x}_i^{(s)})\}_{i=1}^b$\;
            $c^a_s \gets 0, c^r_s \gets 0$\;
            \For{$i$ from $1$ to $b$}{
                \For{$j$ from $1$ to $b$}{
                    $u \sim U(0,1)$\;
                    \If{$u \leq p_{ij}$}{
                        $q_{ij} = (1 + a \|\b{y}_i^{(s)} - \b{y}_j^{(s)}\|_2^{2b})^{-1}$\;
                        $c^a_s = c^a_s +(- \ln(q_{ij}))$\;
                        \For{$m$ iterations}{
                            $l \sim U\{1, \dots, b\}$\;
                            $q_{il} = (1 + a \|\b{y}_i^{(s)} - \b{y}_l^{(s)}\|_2^{2b})^{-1}$\;
                            $c^r_s = c^r_s +(- \ln(1-q_{il}))$\;
                        }
                    }
                }
            }
            $\theta \gets$ backpropagate with loss $(c^a_s + c^r_s)$\;
        }
    }
    \textbf{Return} $\{\b{y}_i\}_{i=1}^n \gets \{f_{\theta}(\b{x}_i)\}_{i=1}^n$\;
\caption{Parametric UMAP algorithm}\label{algorithm_parametric_umap}
\end{algorithm2e}
\DecMargin{0.8em}

\section{Parametric UMAP for Embedding by Deep Learning}\label{section_parametric_UMAP}

Dimensionality reduction algorithms can have their parametric version in which the cost function of the algorithm is used as the loss function of a neural network and the parameters (i.e., weights) of network are trained by backpropagating the error of loss function. 
Inspired by parametric t-SNE \cite{van2009learning}, we can have parametric UMAP \cite{sainburg2020parametric}. Parametric UMAP uses UMAP's cost function, Eq. (\ref{equation_umap_cost}), as the loss function of a neural network for deep learning. It optimizes this cost in mini-batches rather than on the whole dataset; therefore, it can be used for embedding large datasets. 
Also, because of nonlinearity of neural networks, the parametric UMAP can handle highly nonlinear data better than UMAP. 
Note that the learnable parameters in UMAP are the embedding of points but the learnable parameters in the parametric UMAP are the weights of neural network so that the embedding is obtained by network.
An advantage of parametric UMAP over parametric t-SNE is that the probability of UMAP in the embedding space, Eq. (\ref{equation_UMAP_q}), does not have a normalization factor; hence, there is no need to normalize over the whole dataset. This makes UMAP easy to be used in deep learning.

The algorithm of parametric UMAP is shown in Algorithm \ref{algorithm_parametric_umap}. As this algorithm shows, we make mini-batches of data points where $\mathcal{B}_s$ denotes the $s$-th batch. In every epoch, we iterate over mini-batches and for every mini-batch, we iterate twice over the points of batch to have anchors and positive points. We optimize the cost function of UMAP over the batch and not the entire data. We denote the attractive and repulsive cost functions by $c_s^a$ and $c_s^r$, respectively. 
According to Eq. (\ref{equation_umap_cost2_withAttractiveAndRepulsiveForces}), the loss function of neural network in parametric UMAP should be $(c_s^a + c_s^r)$. Backpropagating this loss function trains the parameters $\theta$ of the neural network denoted by $f_\theta(.)$. After training, the embeddings are obtained as $\{f_\theta(\b{x}_i)\}_{i=1}^n$.
Note that as was discussed in Section \ref{section_supervised_semisupervised_umap}, one can combine the UMAP's cost function with cross-entropy loss or triplet loss to have supervised or semi-supervised embedding. Moreover, combining UMAP's cost function with reconstruction loss can train an autoencoder for embedding in its middle code layer.

\SetAlCapSkip{0.5em}
\IncMargin{0.8em}
\begin{algorithm2e}[!t]
\DontPrintSemicolon
    \textbf{Input}: New batch of training data $\b{X}_\text{new} = \{\b{x}_i\}_{i=1}^n$\;
    $\b{X}_\text{new}, \b{X}_\text{updated} \gets$ Update $k$NN graph by PANENE method\;
    \uIf{it is initial batch}{
        Initialize $\{\b{y}_i\}_{i=1}^n$ by Laplacian eigenmap\;
    }
    \Else{
        \For{each new point $\b{x}_i$ in $\b{X}_\text{new}$}{
            Find nearest neighbor to previously accumulated data\;
            Initialize $\b{y}_i$ to the embedding of the nearest neighbor point plus Gaussian noise\;
        }
    }
    \For{each new or updated point $\b{x}_i$ in $\b{X}_\text{new}$ or $\b{X}_\text{updated}$}{
        Calculate/update $\rho_i$ and $\sigma_i$ by Eqs. (\ref{equation_umap_rho}) and (\ref{equation_umap_sigma})\;
        Calculate $p_{ij}$ and $q_{ij}$ for $\forall j$ by Eqs. (\ref{equation_UMAP_p}) and (\ref{equation_UMAP_q})\;
    }
    \While{not converged}{
        \For{each new or updated point $\b{x}_i$ in $\b{X}_\text{new}$ or $\b{X}_\text{updated}$}{
            \For{$j$ from $1$ to $n$}{
                $u \sim U(0,1)$\;
                \If{$u \leq p_{ij}$}{
                    $\b{y}_i \gets \b{y}_i - \eta \frac{\partial c^a_{i,j}}{\partial \b{y}_i}$\;
                    \For{$m$ iterations}{
                        $l \sim U\{1, \dots, n\}$\;
                        $\b{y}_i \gets \b{y}_i - \eta \frac{\partial c^r_{i,l}}{\partial \b{y}_i}$\;
                    }
                }
            }
        }
    }
    \textbf{Return} embedding $\{\b{y}_i\}$ for the new and updated points\;
\caption{Progressive UMAP algorithm}\label{algorithm_umap_progressive}
\end{algorithm2e}
\DecMargin{0.8em}

\section{Progressive UMAP for Streaming and Out-of-sample Data}\label{section_progressive_UMAP}

The original UMAP does not support out-of-sample (test) data embedding. Progressive UMAP \cite{ko2020progressive} can embed out-of-sample data. It can also be used for embedding streaming (online) data. Although UMAP is generally faster than t-SNE, it takes some noticeable time to embed big data. Progressive UMAP can be used to embed some portion of data and then complete the embedding by embedding the rest of data as streaming data. 

The algorithm of progressive UMAP is shown in Algorithm \ref{algorithm_umap_progressive}. It uses PANENE \cite{jo2018panene} for constructing a streaming $k$NN graph. PANENE uses randomized kd-trees \cite{muja2009fast} for approximating and updating $k$NN graph. When a new batch of data, $\b{X}_\text{new}$, is arrived, some of data points are affected in $k$NN graph because their neighbors are changed or they become new neighbors of some points or they are no longer neighbors of some points. We denote the affected points by $\b{X}_\text{updated}$. If $\b{X}_\text{new}$ is the first batch of data, we embed data by Laplacian eigenmap, exactly as the original UMAP does. If the coming batch is not the first batch, we find the nearest neighbor of every new point among the previously accumulated data. Then, we set the initial embedding of every new point as the embedding of its nearest neighbor point added with some Gaussian noise. 
As in the original UMAP, for every new or updated point, we calculate or update $\rho_i$ and $\sigma_i$ by Eqs. (\ref{equation_umap_rho}) and (\ref{equation_umap_sigma}).
For every new point $\b{x}_i$, we also calculate $p_{ij}$ and $q_{ij}$ by Eqs. (\ref{equation_UMAP_p}) and (\ref{equation_UMAP_q}), where $j$ indexes all existing and new points. Then, for every new or updated point, we update the embedding of point by gradient descent where $m$ negative samples are used as in the original UMAP. The paper of progressive UMAP has not mentioned updating the embedding of neighbor (positive) points which we have in the original UMAP.

\section{Conclusion}\label{section_conclusion}

This was a tutorial paper on UMAP and its variants. We explained the details of UMAP and derived its gradients. We explained the theory of UMAP's cost function by algebraic topology and fuzzy category theory. We compared UMAP with t-SNE and LargeVis, discussed the repulsive forces and negative sampling, and introduced DensMAP, parametric UMAP, and progressive UMAP.  
For brevity, some other algorithms such as using genetic programming with UMAP \cite{schofield2021using} was not explained in this paper.




\bibliography{References}
\bibliographystyle{icml2016}

\end{document}